\newcommand{\cG}{\mathcal{G}}
\newcommand{\cE}{\mathcal{E}}
\def\NN{\mathbb{N}}
\def\RR{\mathbb{R}}
\def\rmax{r_\mathrm{max}}
\newtheorem{lem}{Lemma}
\newtheorem{thm}[lem]{Theorem}
\newtheorem{defin}[lem]{Definition}
\title{\LARGE \bf Frequency violations from random disturbances: an MCMC approach}
\author{John Moriarty$^{1}$, Jure Vogrinc$^{1}$ and Alessandro Zocca$^{2}$%
\thanks{$^{1}$John Moriarty and Jure Vogrinc are with the School of Mathematics,  Queen Mary University of London, London E1 4NS, UK. Email: 
        {\tt\small j.moriarty@qmul.ac.uk} and {\tt\small j.vogrinc@qmul.ac.uk}}%
\thanks{$^{2}$Alessandro Zocca is with the California Institute of Technology, Pasadena CA, 91125, USA. Email: 
        {\tt\small azocca@caltech.edu}}%
}
\begin{document}

\maketitle


\begin{abstract}

The frequency stability of power systems is increasingly challenged by various types of disturbances. In particular, the increasing penetration of renewable energy sources is increasing the variability of power generation and at the same time reducing system inertia against disturbances. In this paper we are particularly interested in understanding how rate of change of frequency (RoCoF) violations could arise from unusually large power disturbances.

We devise a novel specialization, named \textit{ghost sampling}, of the Metropolis-Hastings Markov Chain Monte Carlo method that is tailored to efficiently sample rare power disturbances leading to nodal frequency violations. Generating a representative random sample addresses important statistical questions such as ``which generator is most likely to be disconnected due to a RoCoF violation?" or ``what is the probability of having simultaneous RoCoF violations, given that a violation occurs?'' Our method can perform conditional sampling from any joint distribution of power disturbances including, for instance, correlated and non-Gaussian disturbances, features which have both been recently shown to be significant in security analyses. 

\end{abstract}


\section{INTRODUCTION}
\label{sec:intro}

Frequency stability is a prime concern of transmission system operators, as frequencies instabilities may lead to machines desynchronization and trigger large power outages~\cite{Andersson2005}.
Transmission systems are experiencing increased stress and approaching their stability limits due, for example, to the continued connection of power electronics, increased uncertainty, cross-border power flows and the addition of high-voltage direct current (HVDC) links~\cite{Winter2015}. In particular, system inertia is decreasing as synchronous machines are replaced by inverter-connected distributed generation. Low inertia levels, together with the variability of renewable generation, can lead to large swings in the power system frequency~\cite{Ulbig2014}. While promising mitigations exist including participation from loads~\cite{Zhao2014a,Vincent2016}, distributed energy resources~\cite{Guggilam2017} and virtual inertia~\cite{Poolla2017a}, it is increasingly important to also understand the stability of power system frequency under random disturbances to the network's power injections. In the Irish transmission system, for example, the rate of change of frequency (RoCoF) has been identified as the key limit to allowing high real-time penetrations of wind generation \cite{creighton2013increased}. Further it has been shown that frequency fluctuations have a heavy-tailed distribution~\cite{Schafer2018}, making strong deviations more likely than would be expected under, for example, a Gaussian model of fluctuations. Beyond questions of system stability it has also been shown that stochastic disturbances can cause significant resistive power losses to be incurred in stabilising the system frequency~\cite{Tegling2015}.

In this paper we aim to investigate the extent to which infrequently observed, but large, disturbances to nodal power injections can cause an unexpected rate of change in the power system frequency. In particular, following a disturbance to one or more nodal power injections we model the rate of change of frequency (RoCoF) over the primary control timescale. Motivated by stability considerations for RoCoF relays, we develop a simulation technique to investigate possible causes of RoCoF violations. The frequency considered is either the system average frequency or the set of nodal frequencies, and different characterizations of RoCoF violations are explored.

Estimating the probability of rare events in power systems is computationally challenging. Recent work in this area includes~\cite{Owen2017, NestiZoccaZwart2017, Nesti2017} and the present paper is complementary to such studies. Instead we aim to generate a representative sample of disturbances, {\em conditional} on a RoCoF violation occurring. To this end we present {\em ghost sampling}, a specialization of the Metropolis-Hastings Markov Chain Monte Carlo (MCMC) method. Relative to current probabilistic power system reliability analyses, a key advantage of MCMC is to allow the power disturbances to have arbitrary joint distribution. Whereas independent disturbances have always been assumed so far in the literature (see \cite{PaganiniMallada2017}), the use of MCMC allows the disturbance magnitudes to have both correlation and arbitrary marginal probability distributions. Simultaneous nodal power disturbances with a common cause, due for example to line failures or large weather fronts, may thus be modelled in this framework. 

The low probability of RoCoF violations is, however, a challenge to the standard Metropolis-Hastings MCMC algorithm. In common with many other sampling techniques, the latter approach may result in a large part of computational effort being expended in proposing non-violating states. Further, when a violating state is sampled, standard MCMC chains risk becoming `stuck' in its vicinity (see Section~1.11.2 of \cite{Handbook}). The `ghost proposal' described below mitigates these issues by proposing only violating states.

In the nodal analysis we derive expressions for the full set of RoCoFs at time $t \geq 0$. We also obtain expressions for the set of disturbances whose maximum RoCoF exceeds a given acceptable threshold, which may vary per node. From the  representative sample generated we are able to estimate quantities of interest conditional on a violation, in contrast with the average-case analyses common in the literature, see e.g.,~\cite{PaganiniMallada2017,Tegling2015}.

\section{MODEL DESCRIPTION}
\label{sec:model}
A power system described by the graph $G = (\cG,\cE)$ is considered, with nodes (buses) $\cG = \{1, \dots, n\}$ and $m$  edges (transmission lines) $\cE \subseteq \cG \times \cG$. It is assumed that $G$ is a reduced network in which each bus houses a generation unit, since passive loads can be eliminated via Kron reduction~\cite{DB13,DB10}.

Writing $\omega_j$ for the frequency at node $j \in \cG$, the time evolution of nodal frequencies is modelled via linearized dynamics as
\begin{equation}
\label{eq:swing}
	M_j \dot{\omega}_j + D_j \, \omega_j=p^{\mathrm{in}}_j - p^{\mathrm{out}}_j,  \qquad \forall \, j \in \cG,
\end{equation}
where $M_j>0$ is the inertia of the generator at node $j \in \cG$, $D_j >0$ is the damping/droop control coefficient, and $p^{\mathrm{in}}_j$ and $p^{\mathrm{out}}_j$ represent, respectively, the mechanical power injected by the generator at node $j$ and the net electrical power drawn by the network from node $j$; see~\cite{Kundur1994} for more details.

Reactive power injections and reactive power flows are neglected and the standard assumptions of lossless lines, time-invariant identical voltage magnitudes across all nodes and small-signal approximations~\cite{Purchala2005,Wood2014} are made. In view of these assumptions, the so-called DC power flow approximation holds, namely
\begin{equation}
\label{eq:powerfloweq}
	p^{\mathrm{out}}_j = \sum_{i \in \cG} f_{i,j} = \sum_{i \in \cG} B_{i,j} (\theta_i - \theta_j),
\end{equation}
where $f_{i,j}$ describes the power flow on line $e=(i,j) \in \cE$, $B_e=B_{i,j}\geq 0$ is the (effective) susceptance between nodes $i$ and $j$ and $\theta_j$ denotes the phase angle at node $j\in \cG$. Note that an arbitrary but fixed orientation has been chosen for the edges in $\cE$, which is captured by the incidence matrix $C \in \{-1,0,1\}^{n \times m}$ of $G$, that is
\[
	C_{i,e}=
	\begin{cases}
	1 & \text{ if } e=(i,j),\\
	-1 & \text{ if } e=(j,i),\\
	0 & \text{ otherwise.}
	\end{cases}
\]
Denoting by $B \in \mathbb R^{m \times m}$ the diagonal matrix with the susceptances $\{B_e\}_{e=1,\dots,m}$ as diagonal entries, the relation between line flows and phase angles may be rewritten in matrix form as
\[
	f = B C^T \theta,
\]
where $f \in \mathbb R^m$ and $\theta \in \mathbb R^n$ are the vectors of line flows and phase angles, respectively.

We are interested in how, starting from an equilibrium point, the network reacts to a vector $u \in \mathbb R^n$ of
nodal power disturbances. In view of~\eqref{eq:swing} and~\eqref{eq:powerfloweq}, the \textit{deviations} from their nominal values of the nodal frequencies and line power flows are then described by
\begin{subequations}
\begin{align}
	& M_j \dot{\omega}_j = - D_j \, \omega_j + u_j - \sum_{i \,:\, (i,j) \in \cE} f_{i,j},  &\forall \, j \in \cG, \label{eq:swing_1a}\\
	& \dot{f}_{i,j} = B_{i,j} (\omega_i - \omega_j), \quad &\forall \, (i, j) \in \cE, \label{eq:swing_1c}
\end{align}
\end{subequations}
where, with a minor abuse of notation, the variables $\omega$ and $f$ henceforth denote deviations from the corresponding nominal values at equilibrium. This means, in particular, that at equilibrium all variables in equations (3) are equal to 0.

The entries $u_j$, $j \in \cG$ of the random disturbance vector $u \in \mathbb R^n$ are modelled as continuous random variables with joint probability density function $\pi$, so that for any measurable subset $A \subseteq \mathbb R^n$ we have
\begin{equation}
\label{eq:prob}
	\mathbb{P}[u\in A]=\int_A \pi(u_1,\dots u_n) \, du_1\dots du_n.
\end{equation}
The correlation in renewable generation, alongside correlation in other factors such as loads, has been shown to have a significant effect on power system risk assessment~\cite{li2015transmission}. One advantage of our approach is that the random disturbances $u_j$ are not required to be independent. This is because we aim to simulate typical disturbances $u$ causing frequency violations rather than, for example, to derive closed form expressions for synchronization performance as in~\cite{PaganiniMallada2017} or~\cite{Tegling2015}. 

Further, the errors in renewable power forecasts have been shown to have significantly non-Gaussian distributions. For example, fat tails have been demonstrated in wind power forecast errors~\cite{bludszuweit2008statistical}. This is another advantage of our conditional simulation procedure, since the random disturbances can have a general joint probability density. To illustrate this point the case study presented later in Section~\ref{sec:cs} uses a mixture of uncorrelated Gaussian and correlated, fat tailed non-Gaussian distributions.

The $u_j$ are modelled as step disturbances, namely
\[
	u(t)=u \, \mathbf{1}_{\{t \geq 0\}}.
\]
Thus time $t=0$ is the moment just after the random disturbance(s). The desynchronization effect of $u$ on the frequencies at all nodes $j \in \cG$ will be modelled from time $t=0$ until time $t=\epsilon>0$. This step model is valid when the 
disturbances represented by the $u_j$ can be reasonably approximated as constant over the time interval $[0,\epsilon]$. (In the case study below we take $\epsilon=0.5\mathrm{s}$.)

Our method in the remainder of the paper has two parts, as follows:
\begin{enumerate}
\item characterise the `safe region' $K \subset \RR^n$ of disturbances $u \in \RR^n$ which do not give rise to frequency violations;
\item generate a statistically representative sample from its complement $K^c$. 
\end{enumerate}

Frequency violations will be characterised using the RoCoF, by which we mean $|\dot\omega|$, the magnitude of the rate of change of frequency. The analysis begins with the rate of change of the system frequency, before moving to the consideration of nodal frequencies. The latter context is particularly pertinent when nodal frequencies are considered, since generating machines are protected by RoCoF-sensitive relays which observe only the local nodal frequency. 


The rest of the paper is organized as follows. In Section~\ref{sec:sfreq} we introduce step 1) in the simpler context of analysing the system frequency. In Section~\ref{sec:nodal} the nodal frequency dynamics are first established, and the step 1) is then carried out in this context. Step 2) is developed in Section~\ref{sec:math}, and an illustrative case study for our analysis is provided in Section~\ref{sec:cs}.

\section{System frequency}\label{sec:sfreq}

The \textit{system frequency} or \textit{center of inertia} (COI) is defined as (see, for example, \cite{Ulbig2014}):
\[
	\bar{\omega}(t):=\frac{\sum_{i \in \cG} M_i \, \omega_i(t)}{\sum_{i \in \cG} M_i}.
\]
This model has been studied in \cite{PaganiniMallada2017} under the simplifying condition that there exist rating parameters $f_1,\dots,f_n >0$ with $\max_i f_i=1$ such that the inertia and damping coefficients of generator $i$ are given respectively by
\begin{equation}
\label{eq:condition1}
	M_i= f_i M \quad \text{ and } \quad D_i = f_i D, \quad i \in \cG,
\end{equation}
where $M$ and $D$ are those of the machine $j$ such that $f_j=1$.
In particular, it is shown in the latter paper (cf. Eq.(18)) that the following holds under condition~\eqref{eq:condition1}:
\begin{equation} \label{eq:systemfreq}
	\bar{\omega}(t)= g(t) \sum_{i} u_i, \quad t >0,
\end{equation}
where $g(t) := \left (\sum_{i} D_i \right )^{-1} \left( 1 - e^{-(D/M) t} \right)$. 

The swing and network dynamics in~\eqref{eq:swing_1a}-\eqref{eq:swing_1c} can be enriched to incorporate the turbine control dynamics, yielding the following third-order model
\begin{subequations}
\begin{align}
	& \dot{\omega}_j = -\frac{1}{M_j} \Big ( D_j \, \omega_j - q_j - u_j +  \sum_{i \,:\, (i,j) \in \cE} f_{i,j} \Big) \label{eq:turb1}\\
	& q_j = - \frac{1}{\tau} ( R_j^{-1} \omega_j + q_j)\label{eq:turb2}
\end{align}
\end{subequations}
where $q_j$ is the (variation of) turbine power, $R_j$ the droop coefficient, and $\tau$ is the turbine time constant (which is uniform across different generators). 
If in addition to~\eqref{eq:condition1} we further assume that $R_j^{-1}= f_j R^{-1}$ for every $j \in \cG$ and that the system is under-damped, i.e.,~$\omega_d^2:= \frac{D+R^{-1}}{M \tau} - \frac{1}{4} \left ( \frac{1}{\tau} + \frac{D}{M} \right)^2 >0$, the system frequency still obeys an equation of the same form as~\eqref{eq:systemfreq}, where the function $g(t)$ is now a more involved function, namely
\[
	g(t) := \frac{1- e^{ - \eta t} (\cos(\omega_d t) - \frac{\gamma-\eta}{\omega_d} \sin (\omega_d t) )}{(\sum_i D_i+R_i^{-1})(D+R^{-1})},
\]
where $\eta:=\frac{1}{2} \Big( \frac{1}{\tau} + \frac{D}{M} \Big)$ and $\gamma:=\Big( \frac{1}{\tau} - \frac{R^{-1}}{M} \Big)$.

From both models, with or without turbine control, it is proved in~\cite{PaganiniMallada2017} that the maximum RoCoF occurs at time $t \downarrow 0+$ and is equal to
\[
	\max_{t > 0} \left|\frac{d}{d t} \overline{\omega}(t)\right| = \lim_{t \to 0+} \left|\frac{d}{d t} \bar{\omega}(t)\right| = \Big |\sum_{i \in \cG}u_i\Big | \left | \dot{g}(0)\right |.
\]



Hence, for both dynamics, with or without turbine control, 
the set of disturbances $u \in \mathbb R^n$ whose maximum induced RoCoF 
 does not exceed a predetermined threshold $\rmax$ is simply
\[
	K_{\mathrm{MS}} =\left\{ u \in \mathbb R^n : \Big|\sum_i u_i\Big| \leq \frac{\rmax}{|\dot g(0)|} \right\}.
\]
The region $K_{MS}$ is a convex polytope in $\mathbb R^n$ -- that is, the intersection of a number of half-spaces, which is not necessarily bounded. (Here the subscript M refers to the Maximum RoCoF metric and S to the System frequency).


The metric of average absolute RoCoF may alternatively be considered.
Over the time interval of length $\epsilon$ following the disturbance $u$ this is given by $\Omega(\epsilon)$, where
\begin{equation}\label{eq:avg}
\Omega(\epsilon) := \frac 1 \epsilon \int_0^\epsilon |\dot{\bar{\omega}}(t)|dt = \frac 1 \epsilon \left|\sum_i u_i\right| \int_0^\epsilon |\dot g(t)|dt.
\end{equation}
The set of disturbances whose average induced RoCoF over $t \in [0,\epsilon]$ does not exceed the threshold $\rmax$ is thus
\[
	K_{\mathrm{AS}} = \left\{ u \in \mathbb R^n : \Big|\sum_i u_i\Big| \leq \frac{\epsilon \, \rmax}{\int_0^\epsilon |\dot g(t)|dt} \right\}.
\]
Since the turning points of $g$ can be calculated analytically for both models, the evaluation of the integral is straightforward. The set $K_{\mathrm{AS}}$ of disturbances is also a convex polytope. 

In summary, the `safe region' of disturbances $u \in \RR^n$ inducing a rate of change in the system frequency less than a threshold $r$ is a convex polytope, for either the maximum or average absolute RoCoF metric, and with or without turbine control. In the next section we are able to establish a similar result for the set of nodal frequencies under swing dynamics.


\begin{figure}[htbp]
\centering
\subfloat[Time evolution over the first $2\mathrm{s}$ of the nodal frequency deviations $\omega_1(t)$, $\omega_2(t)$, and $\omega_3(t)$ and the system frequency $\bar{\omega}(t)$ ($\mathrm{Hz}$) in the case study of Section \ref{sec:cs} after a random disturbance.]{\includegraphics[width=0.45\textwidth]{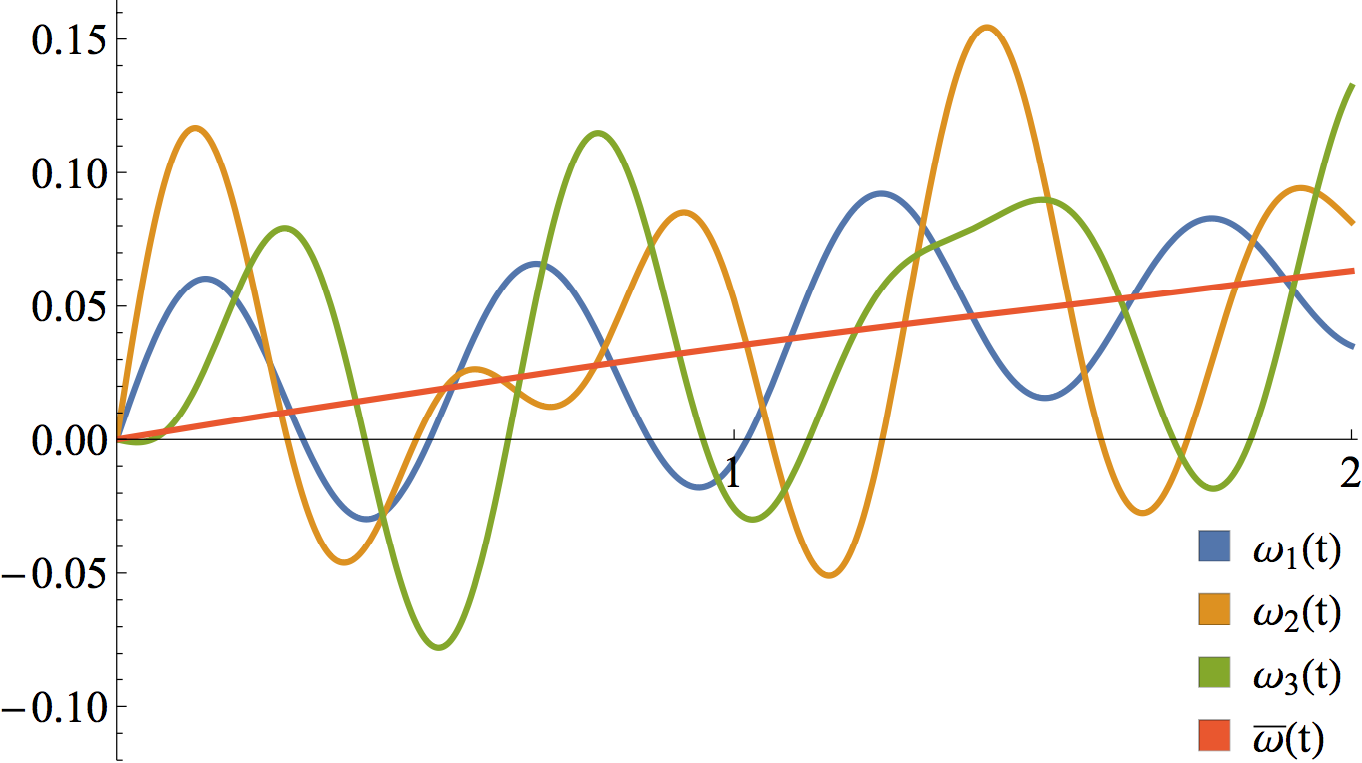}}\\
\subfloat[Corresponding evolution over the first $2\mathrm{s}$ of frequency speed deviations $\dot{\omega}_1(t)$, $\dot{\omega}_2(t)$, $\dot{\omega}_3(t)$, and $\dot{\bar{\omega}}(t)$ ($\mathrm{Hz}/\mathrm{s}$) for the same random disturbance as in Fig.~\ref{fig:nodalfrequencies}(a). The dashed horizontal lines represent the RoCoF threshold $\rmax=1\mathrm{Hz}/\mathrm{s}$.]{\includegraphics[width=0.45\textwidth]{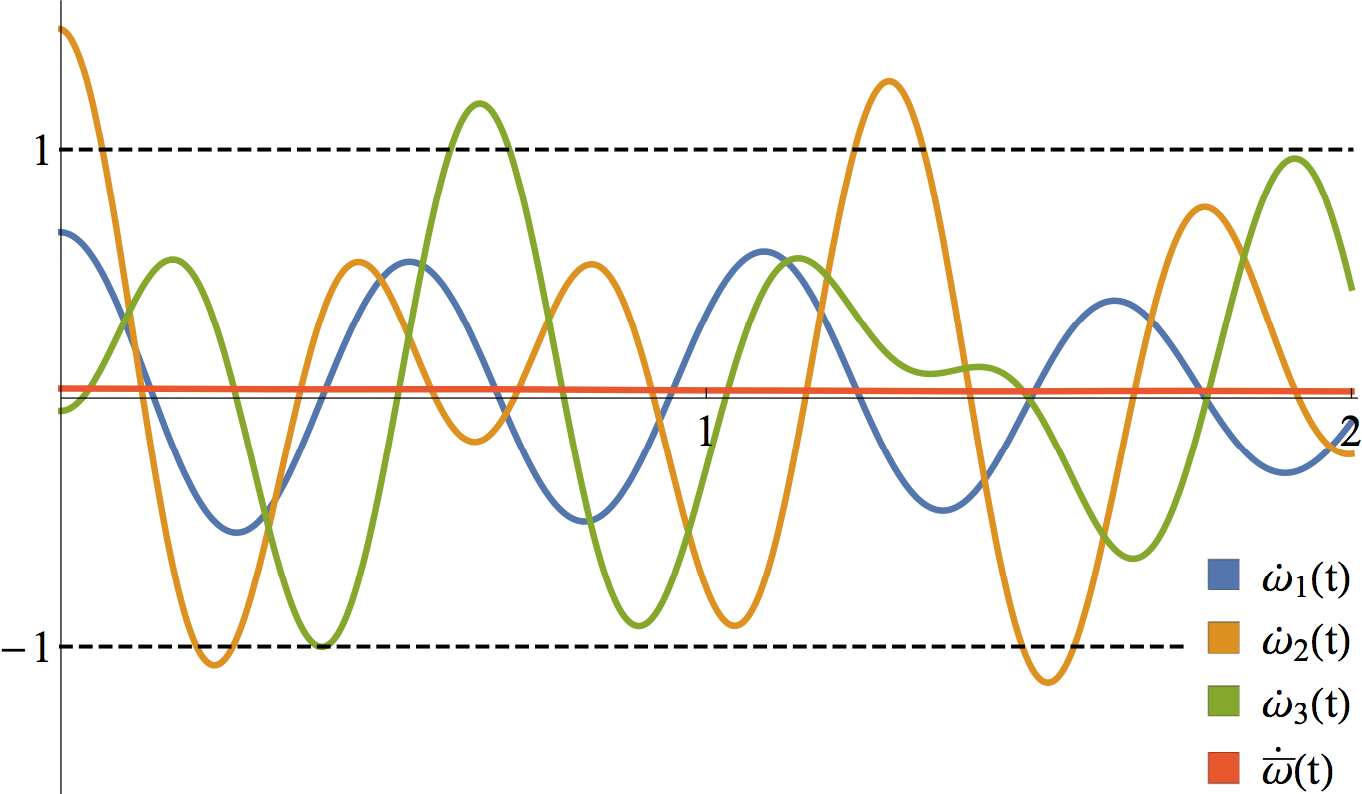}}
\caption{Post-disturbance traces of some nodal frequencies for the case study about the IEEE 39 New England interconnection system presented of Section~\ref{sec:cs}.}
\label{fig:nodalfrequencies}
\end{figure}

\section{RoCoF violations for nodal frequencies}
\label{sec:nodal}

While the system frequency is of inherent interest, it cannot capture de-synchronization within the system. To illustrate this point, Fig.~\ref{fig:nodalfrequencies} plots three nodal frequency traces together with the system frequency following a random disturbance. These traces are simulated from the system described in the case study of Section \ref{sec:cs}. It is clear from the figure that, under either the maximum or average absolute RoCoF metric, it is possible for a given threshold to be simultaneously respected by the system frequency and violated by one or more nodal frequencies. Further, as mentioned above, from the practical perspective it is nodal frequencies which trigger the operation of RoCoF-sensitive generator protection relays which can, in turn, lead to significantly more serious frequency violations. The main aim of this paper is therefore sampling the typical causes of nodal frequency violations, and we focus on this from now on. We remark that the generator inertia and damping coefficients $M_i$ and $D_i$ respectively may be arbitrary, so that our setting is more general than that presented in Section~\ref{sec:sfreq} (cf.~condition~\eqref{eq:condition1}).

Let $M \in \RR^{n \times n}$ and $D \in \RR^{n \times n}$ be the diagonal matrices containing the generator inertias and damping factors respectively, and $u \in \RR^n$
the random vector of disturbances. Together with the notation from the previous section, the swing equations~\eqref{eq:swing_1a} and~\eqref{eq:swing_1c} read
\begin{eqnarray*}
\begin{bmatrix}
\dot{\omega}\\
\dot{f}
\end{bmatrix}&=&
\begin{bmatrix}
-M^{-1} D & -M^{-1} C\\
B C^T & \mathbb{O} 
\end{bmatrix}\cdot
\begin{bmatrix}
\omega\\
f
\end{bmatrix}
+
\begin{bmatrix}
M^{-1}\\
0
\end{bmatrix}
u,
\\
\begin{bmatrix}
\omega(0)\\
f(0)
\end{bmatrix}&=&
\begin{bmatrix}
0\\
0
\end{bmatrix}.
\end{eqnarray*}
By differentiation we obtain
\begin{eqnarray*}
	\ddot{\omega}&=&-M^{-1} D \,  \dot{ \omega} -M^{-1} C B C^T \, \omega
	\\
	&=& -M^{-1} D \, \dot{ \omega} -M^{-1} L \, \omega, \\
	\dot{\omega}(0)&=& M^{-1} u,
\end{eqnarray*}
where $L:=C B C^T$ is the weighted Laplacian matrix of the graph $G$.
Ignoring the line flows, we thus obtain a homogeneous dynamical system of the form $\dot{x} = A x$, with 
\begin{eqnarray*}
x &=& 
\begin{bmatrix}
\dot{\omega}\\
{\omega}
\end{bmatrix},
\quad
A=
\begin{bmatrix}
-M^{-1} D & -M^{-1} L\\
I & \mathbb{O} 
\end{bmatrix},
\\
x(0)&=&
\begin{bmatrix}
M^{-1} u\\
0
\end{bmatrix},
\end{eqnarray*}
whose solution is 
\begin{equation}
\label{eq:expsol}
	x(t) = \exp (t A) \, x(0).
\end{equation}


Henceforth the maximum RoCoF metric will be used to characterise frequency violations: the average metric may be applied in a similar manner but we reserve this for future work. The set of disturbances $u$ which do not induce RoCoF violations will again be referred to as the `safe region'. 



As confirmed by Fig.~\ref{fig:nodalfrequencies}(b), and in contrast to the system frequency models of Section~\ref{sec:sfreq}, for a fixed node $j$ the maximum RoCoF $\dot{\omega}_j$ does not in general occur at time $0$. Let us therefore consider sampling $\dot{\omega}_j$ at times $\frac{n}{N}\epsilon$, $n=0, \ldots, N$. Although in principle this involves no loss of generality since digital RoCoF measurements have a discrete sampling rate, we note that any lower sampling rate $N/\epsilon$ should be chosen carefully to avoid an excessive loss of sensitivity (a sensitivity analysis for $N$ is provided in the case study of Section \ref{sec:cs}). Define the `node-$j$ safe region' $K^{(j,N)}$ by
\begin{align}
	K^{(j,N)} &= \bigcap_{n=0}^N K^{(j,N)}_n, \text{ where } \\
	K^{(j,N)}_n &= \left\{ u \in \mathbb R^n : \left|\dot{\omega}_j\left(\frac{n}{N}\epsilon\right)\right| \leq \rmax \right\}.
	\label{eq:knjn}
\end{align}

It follows from \eqref{eq:expsol} that $K^{(j,N)}_n$ is given by the convex polytope
\begin{align}
	K^{(j,N)}_n &= \left\{ u \in \mathbb R^n : \left| \exp \Big ( \frac{n \, \epsilon}{N} A \Big )_j
	\begin{bmatrix}
		M^{-1} u\\
		0
	\end{bmatrix}	
	\right| \leq \rmax\right\},\nonumber
\end{align}
where $\exp (t A)_j$ denotes the $j$-th row of the matrix $\exp( t A)$. Hence $K^{(j,N)}$ and the `all-nodes safe region' $K^{(N)}$ are also convex polytopes, where
\begin{align}\label{eq:Kn}
K^{(N)} = \bigcap_{j \in \cG} K^{(j,N)}. 
\end{align}
(Note that, clearly, different thresholds $\rmax$ could be chosen per node to enable modelling of differing protection relay settings for differing types of generating machine, or to enable to modelling of DC links, and the safe region would again be a convex polytope).

Having characterised the safe region, we now turn to the problem of generating a representative sample from its complement. In the next section we describe how the Metropolis-Hastings MCMC algorithm, a commonly used technique for generating random samples, may be efficiently adapted for this purpose.

\section{Ghost sampling}
\label{sec:math}

Recalling from~\eqref{eq:prob} that the entries $u_j$ of the random disturbance $u$ are modelled as continuous random variables with a joint probability density function $\pi$, 
%
the goal in this section is to sample efficiently from the conditional joint density, or {\em target}, $$\frac{\pi(u)\mathbf{1}_{K^c}(u)}{\pi(K^c)},\quad \text{where} \quad \pi(K^c)=\int_{K^c}\pi(v)dv.$$

Since MCMC sampling methods do not require strong assumptions on the target density they are ideally suited to such  problems \cite{Handbook}, \cite{tierney}. However the event $K^c$ is in principle rare, which may cause problems of computational inefficiency. Below we describe the \emph{ghost sampler}, a particular Metropolis-Hastings (MH) algorithm  
designed to be efficient in this context.

The ghost sampler is defined in Section \ref{sec:ghost_sampler}, and in Section \ref{sec:gsprop} it is shown that the generated samples may be used to approximate statistics of the corresponding target distribution. We will show in the case study of Section \ref{sec:cs} that this enables important statistical questions to be addressed such as ``which generator is most likely to be disconnected due to a RoCoF violation?" or ``what is the probability of two simultaneous RoCoF violations being caused, given that a violation occurs?''.





\subsection{Ghost sampling algorithm}\label{sec:ghost_sampler}

Ghost sampling lies in the class of Metropolis-Hastings algorithms \cite{roberts}, \cite{tierney}. That is, beginning at $X_0\in\RR^n$, for every $i=1,2,\dots$ we generate a proposal $Y_{i+1}$ distributed according to a density $q(X_i,y)dy$, and evaluate the acceptance probability 
\begin{equation}\label{eq:ap}
	\small \alpha(X_i,Y_{i+1})=\min\left(1, \frac{\pi(Y_{i+1})\mathbf{1}_{K^c}(Y_{i+1})q(Y_{i+1},X_i)}{\pi(X_i)\mathbf{1}_{K^c}(X_i)q(X_i,Y_{i+1})}\right),
\end{equation}
which is interpreted as one if $\pi(X_i) \mathbf{1}_{K^c}(X_i)q(X_i,Y_{i+1})=0$. With probability $\alpha(X_i,Y_{i+1})$ the proposal is accepted and we set $X_{i+1}=Y_{i+1}$, otherwise it is rejected and $X_{i+1}=X_i$. The aim is to generate a Markov chain $X_1, X_2\dots$ with stationary distribution equal to $\frac{\pi \mathbf{1}_{K^c}}{\pi(K^c)}$ which satisfies the law of large numbers (LLN), meaning that sample averages $\frac 1 n\sum_{i=1}^n f(X_i)$ for large $n$ provide good estimates for the actual conditional expectations $$\frac{\int_{K^c}f(v)\pi(v)dv}{\pi(K^c)}=\mathbb{E}_\pi[f(X)|X\notin K].$$ 

Commonly there is an underlying symmetric density $q\colon \RR^n\to\RR$ (that is, with $q(x)=q(-x)$) and the proposal density used in MH algorithm is $q(x,y)=q(|y-x|)$ where we abuse notation slightly by denoting both with $q$. In this case the algorithm is called \emph{Symmetric Random Walk Metropolis} (SRWM), $q(x,y)=q(y,x)$ holds and the $q$ terms in \eqref{eq:ap} cancel out. Typical examples are $Y_{i+1}\sim N(x,\sigma^2I_n)$ or $Y_{i+1}\sim X_I+U([-\delta,\delta]^d)$, that is, the proposal is drawn from a normal (resp. uniform) distribution centred at $X_i$. Note from \eqref{eq:ap} that knowledge of $\pi \mathbf{1}_{K^c}$ suffices and the normalising constant $\pi(K^c)$ need not be known.

A well-known difficulty with the MH algorithm arises when the target density is multi-modal (see Section~1.11.2 in \cite{Handbook}). In the present application to rare event sampling, where the ``common" events are removed from $\pi$, we may be left with a target density $\frac{\pi \mathbf{1}_{K^c}}{\pi(K^c)}$ with multiple, well-separated local modes. The difficulty arises since a large proportion of the proposals $X_i$ will lie in $K$ and thus be rejected (since then $\alpha(X_i,Y_{i+1})=0$), rendering the method inefficient.

The \emph{ghost proposal} is designed to circumvent these issues by moving `through' $K$, and is now described in the case when $K$ is a bounded convex polytope (clearly $K$ should also have nonzero volume).

Fix an SRWM algorithm with proposal density $q$ and target $\frac{\pi \mathbf{1}_{K^c}}{\pi(K^c)}$. Denote the boundary of $K$ by $\delta K$, and let the current state of the chain be $X_i\notin K$. First, generate a SRWM proposal $Y_{i+1}$ and denote $\varphi_i:=\frac{Y_{i+1}-X_i}{|Y_{i+1}-X_i|}$. Then with probability 1 we have $Y_{i+1} \neq X_i$ and the ray from $X_i$ passing through $Y_{i+1}$ intersects $\delta K$ either twice (cf.~Fig.~\ref{fig:ghost}(b)-(c)) or not at all (cf.~Fig.~\ref{fig:ghost}(a)). If there are two numbers $t_2>t_1>0$ such that $X_i+t \varphi_i \in \delta K$, modify the proposal to $Z_{i+1}=Y_{i+1}+(t_2-t_1)\varphi$ (cf.~Fig.~\ref{fig:ghost}(d)). If there are no such points, set $Z_{i+1}=Y_{i+1}$ (cf.~Fig.~\ref{fig:ghost}(a)).  
Finally perform a MH step, accepting the proposal $Z_{i+1}$ with probability $\alpha(X_i,Z_{i+1})$ given by~\eqref{eq:ap} and setting $X_{i+1}=Z_{i+1}$, otherwise rejecting the proposal and setting $X_{i+1}=X_i$. This procedure is depicted in Fig.~\ref{fig:ghost}  and pseudocode is provided in Algorithm~\ref{alg:ghost}.

\begin{figure}[!h]
\centering
\vspace{-0.1cm}
\subfloat[]{\hspace{0cm}\includegraphics[scale=0.64]{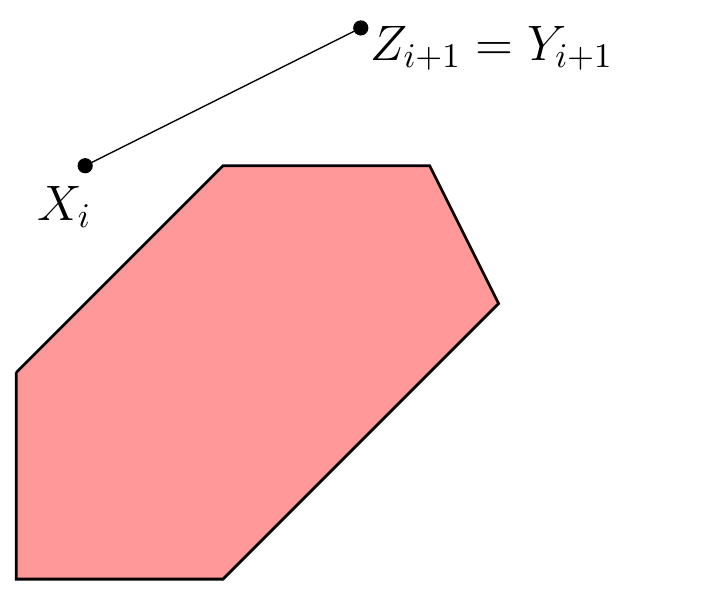}}
\subfloat[]{\hspace{-0.35cm}\includegraphics[scale=0.64]{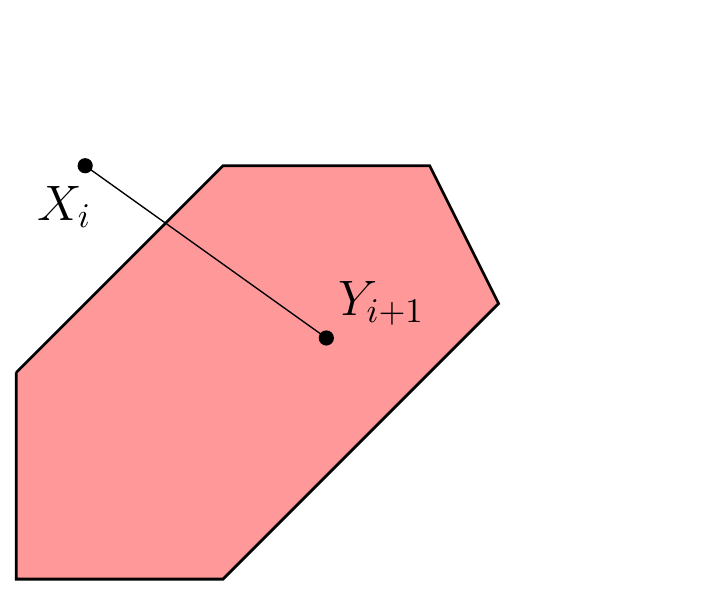}}\\
\vspace{-0.1cm}
\subfloat[]{\hspace{0cm}\includegraphics[scale=0.64]{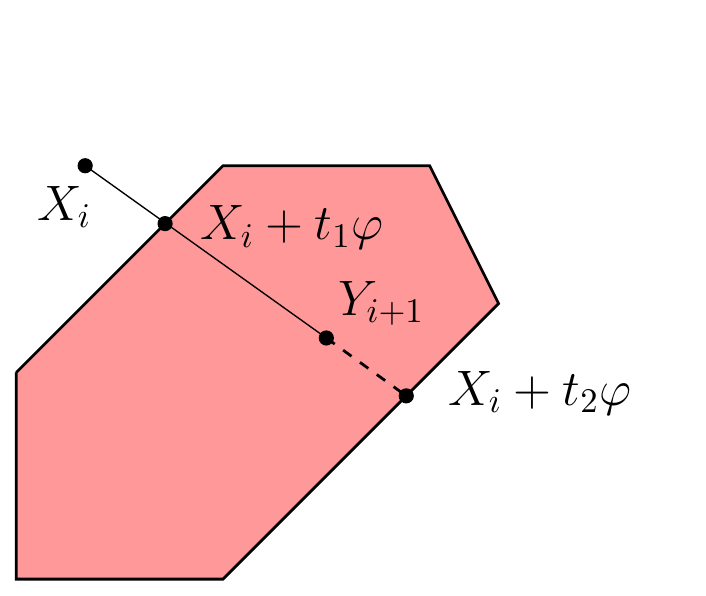}}
\subfloat[]{\hspace{-0.35cm}\includegraphics[scale=0.64]{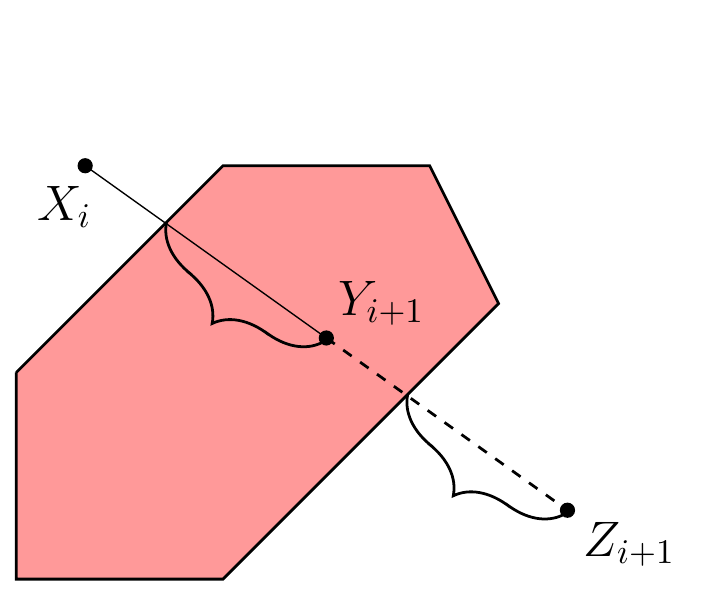}}
\caption{Illustration of the key ideas underlying the ghost sampling method.}\label{fig:ghost}
\end{figure}
\vspace{-0.1cm}
\begin{algorithm}[!h]
\SetAlgoLined
\SetKwInOut{Input}{input}\SetKwInOut{Output}{output}
\Input{$X_i \in K^c$}
\BlankLine
\DontPrintSemicolon
Generate SRWM proposal $Y_{i+1}$;\\
Calculate direction $\varphi_i=\frac{Y_{i+1}-X_i}{|Y_{i+1}-X_i|}$;\\
Calculate all points $T:=\{ t >0 \,:\, X_i+t \varphi \in \delta K\}$;\\
\uIf{$T=\{t_1,t_2\}$}{$Z_{i+1}=Y_{i+1}+(t_2-t_1)\varphi$;}
\Else{$Z_{i+1}=Y_{i+1}$;}
Generate a uniform random variable $U$ on $[0,1]$;\\
\uIf{$U \leq \alpha(X_i,Z_{i+1})$}{$X_{i+1}=Z_{i+1}$;}
\Else{$X_{i+1}=X_i$;}
\KwRet{$X_{i+1}$}\;
\caption{Ghost sampler ($i$-th step)}\label{alg:ghost}
\end{algorithm}

The following example illustrates how ghost sampling improves upon the standard MH algorithm in the present context. Set $K=\{(x,y)\in\RR^2: |x|+|y|<7\}$, let $\pi$ be the two-dimensional Gaussian density with zero mean and covariance matrix $\text{diag}(4,1)$ and let $q$ be the density of a standard two-dimensional Gaussian random variable.

A MH chain starting to the left of the diamond-shaped set $K$ (cf. Fig.~\ref{fig:diamond}) will have difficulties crossing to the right side of the diamond, since a direct move to the other side is unlikely and any sequence of steps towards the right side is likely to suffer rejections because the values of $\pi$ are much smaller around the top and bottom vertices of $K$ then around the left and right vertices. The ghost sampler, however, is likely to make a direct move between the left and right sides. This can be seen in Fig.~\ref{fig:diamond}, where scatter plots are given of values taken by the standard MH chain (in red) and the ghost sampler (blue). In this way the ghost sampler is designed to more efficiently explore the rare event $K^c$.

\begin{figure}[!h]
\centering
\vspace{0.5cm}
\includegraphics[width=0.48\textwidth]{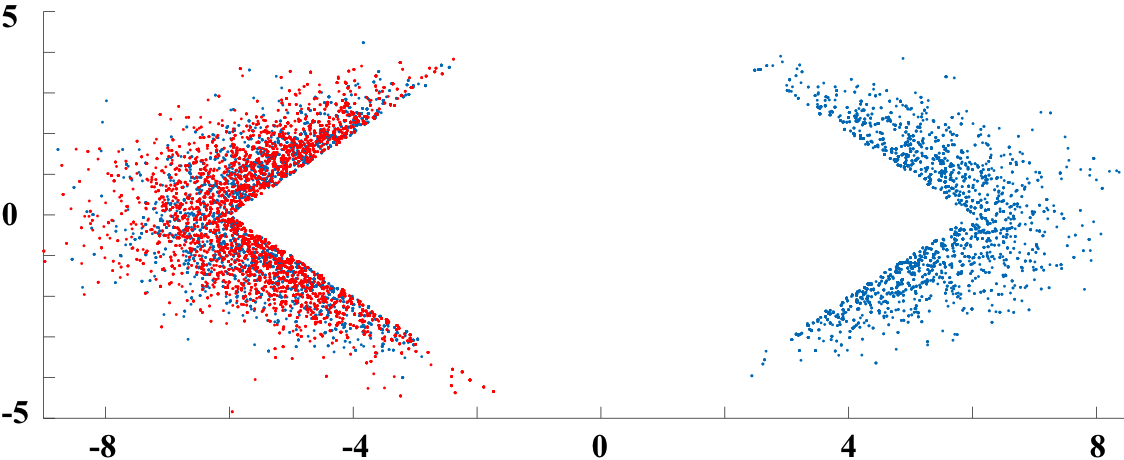}
\caption{Example values taken by the ghost sampler (blue) and standard MH chain (red).}\label{fig:diamond}
\end{figure}

\subsection{Ghost sampler properties}\label{sec:gsprop}
We now state necessary mathematical properties of the ghost proposal in a more general setting, which are proved in the Appendix. More precisely, we assume only that $K$ is closed and {\em ray-bounded} (see Definition \ref{def:man}).

For a starting point $x\in K^c$, a direction $\varphi$ lying in the unit sphere $\mathbb{S}^{n-1}$ and a distance $r>0$, denote the total length between $x$ and $x+r\varphi$ that lies outside $K$ by 
\begin{equation}
l^\varphi_x(r):=\int_0^r\mathbf{1}_{K^c}(x+t\varphi)dt.
\end{equation}
Also denote the mapping $T^K_x\colon \RR^n\to\RR^n$ by
\begin{equation}
T^K_x(x+r\varphi):=x+l^\varphi_x(r)\varphi
\end{equation}
and a modified proposal density termed the \emph{ghost density} as
\begin{equation}
\label{eq:ghostdens}
	q_K(x,x+r\varphi):=q(l^{\varphi}_x(r)\varphi)\left(\frac{l^\varphi_x(r)}{r}\right)^{n-1}\mathbf{1}_{K^c}(x+r\varphi),
\end{equation}
where $l^\varphi_x(r)/r$ is interpreted as $1$ if $r=0$.

Note that for $x,y\in K^c$ we have $q_K(x,y)=q_K(y,x)$, since
\[
	l^{\frac{y-x}{|y-x|}}_x(|y-x|)=l^{\frac{x-y}{|x-y|}}_y(|x-y|).
\]
Intuitively the map $T^K_x$ contracts each ray emanating from $x$ by removing its intersection with $K$. It is the reverse of the ghost sampling modification, in the sense that $T^K_{X_i}(Z_{i+1})=Y_{i+1}$.
\begin{lem}
If $K$ is a closed set and $x\in K^c$, then the map $T^K_x\colon K^c\to \RR^n$ is injective.
\end{lem}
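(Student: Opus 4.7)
The plan is to take two points $y_1,y_2\in K^c$ with $T^K_x(y_1)=T^K_x(y_2)$ and deduce $y_1=y_2$. Every $y\in\RR^n$ with $y\neq x$ admits a unique polar representation $y=x+r\varphi$ with $r>0$, $\varphi\in\mathbb{S}^{n-1}$, so write $y_i=x+r_i\varphi_i$ in this form (with the convention $r_i=0$, direction arbitrary, when $y_i=x$). The key observation that drives everything is that, since $K$ is closed, $K^c$ is open; in particular, both the starting point $x$ and each $y_i$ have full neighbourhoods contained in $K^c$.

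First I would isolate the degenerate case. The definition gives $T^K_x(y_i)-x=l^{\varphi_i}_x(r_i)\varphi_i$. Using that $x\in K^c$ and $K^c$ is open, there is $\delta>0$ with $B(x,\delta)\subset K^c$, so $\mathbf{1}_{K^c}(x+t\varphi)=1$ for all $t\in[0,\min(\delta,r)]$ and every direction $\varphi$; hence $l^\varphi_x(r)>0$ whenever $r>0$. In particular $T^K_x(y_i)=x$ forces $r_i=0$, i.e.\ $y_i=x$, which takes care of the case $y_1=x$ or $y_2=x$.

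Assuming now $r_1,r_2>0$, the equation $l^{\varphi_1}_x(r_1)\varphi_1=l^{\varphi_2}_x(r_2)\varphi_2$ with both coefficients strictly positive forces $\varphi_1=\varphi_2=:\varphi$ (by reading off norms and then directions) and $l^\varphi_x(r_1)=l^\varphi_x(r_2)$. Suppose without loss of generality that $r_1\leq r_2$; the remaining task is then to upgrade monotonicity of $r\mapsto l^\varphi_x(r)$ (which is obvious) to strict monotonicity at the point $r_2$. From $l^\varphi_x(r_1)=l^\varphi_x(r_2)$ I get $\int_{r_1}^{r_2}\mathbf{1}_{K^c}(x+t\varphi)\,dt=0$, so $x+t\varphi\in K$ for almost every $t\in[r_1,r_2]$.

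The main (and essentially only) technical point is to conclude $r_1=r_2$ from this: since $y_2=x+r_2\varphi\in K^c$ and $K^c$ is open, there is $\eta>0$ such that $x+t\varphi\in K^c$ for all $t\in(r_2-\eta,r_2]$. If $r_1<r_2$, intersecting with $[r_1,r_2]$ gives a set of positive Lebesgue measure on which $\mathbf{1}_{K^c}(x+t\varphi)=1$, contradicting the a.e.\ statement above. Hence $r_1=r_2$ and $y_1=y_2$, completing the proof. The only place where the closedness of $K$ is used is in asserting that both $x$ and $y_2$ have open neighbourhoods inside $K^c$; this is what I expect to be the crux of writing up the argument cleanly.
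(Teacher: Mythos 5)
Your proof is correct and follows essentially the same route as the paper's: openness of $K^c$ at $x$ gives $l^\varphi_x(r)>0$ to settle the degenerate case and force $\varphi_1=\varphi_2$, and then $\int_{r_1}^{r_2}\mathbf{1}_{K^c}(x+t\varphi)\,dt=0$ is contradicted via openness of $K^c$ at an endpoint. You simply spell out the last step (which openness is invoked, and why a positive-measure subinterval appears) more explicitly than the paper does.
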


\begin{defin}\label{def:man}
A closed set $K$ is said to be \emph{ray-bounded} if the map $T^K_x: K^c \to \RR^n$ is surjective for all $x\in K^c$.
\end{defin}
\noindent In the context of convex polytopes ray boundedness simply means that any ray starting outside $K$ and intersecting it will also exit it. For instance, the set $\{(x,y)\in\RR^2: |x|<1\}\subset\RR^2$ is ray-bounded but not bounded.

For each $x \in K^c$ and measurable set $A \subset \RR^n$ we define
\begin{eqnarray*}
	Q(x,A)&:=&\int_{\RR^n}q(x,y)\, \mathbf{1}_A(y) \, dy, \\
	Q_K(x,A)&:=&\int_{\RR^n}q_K(x,y) \, \mathbf{1}_A(y) \, dy.
\end{eqnarray*}

The following lemma shows that in the case of ray-bounded polytope procedure described in Section~\ref{sec:ghost_sampler} indeed results in the ghost density defined in \eqref{eq:ghostdens}.

\begin{lem}\label{lemma:Q}
For every $x\in K^c$ and every measurable set $A\subset \RR^n$ we have
\[
	Q_K(x,A)=Q(x,T^K_x(A\cap K^c)).
\]
\end{lem}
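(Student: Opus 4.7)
The plan is to pass to polar coordinates centered at $x$ and then carry out a one-dimensional change of variable along each ray. Writing $y = x + r\varphi$ with $r \geq 0$ and $\varphi \in \mathbb{S}^{n-1}$, the volume element factorises as $dy = r^{n-1}\, dr\, d\sigma(\varphi)$, where $\sigma$ is the surface measure on the unit sphere. Substituting the definition~\eqref{eq:ghostdens} into $Q_K(x,A)$, the factor $r^{n-1}$ from the Jacobian cancels against the denominator in $(l^\varphi_x(r)/r)^{n-1}$, yielding
\[
Q_K(x,A) = \int_{\mathbb{S}^{n-1}}\!\int_0^\infty q(l^\varphi_x(r)\varphi)\,(l^\varphi_x(r))^{n-1}\, \mathbf{1}_{A\cap K^c}(x+r\varphi)\, dr\, d\sigma(\varphi).
\]

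For each fixed direction $\varphi$ I would then perform the substitution $s = l^\varphi_x(r)$. The function $l^\varphi_x$ is absolutely continuous with derivative $\mathbf{1}_{K^c}(x+r\varphi)$ almost everywhere, hence $ds = \mathbf{1}_{K^c}(x+r\varphi)\, dr$; the indicator $\mathbf{1}_{A\cap K^c}$ already present in the integrand supplies precisely this factor, so the substitution is valid and the contribution from the flat pieces of $l^\varphi_x$ (where the ray lies in $K$) is automatically discarded. Ray-boundedness of $K$ guarantees that as $r$ varies over $\{r>0:x+r\varphi\in K^c\}$, the new variable $s$ sweeps all of $(0,\infty)$ up to a countable set of measure zero.

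To convert the remaining indicator, I would invoke the previous lemma: because $T^K_x$ is injective on $K^c$ and $T^K_x(x+r\varphi) = x + s\varphi$ by definition, one has
\[
\mathbf{1}_{A\cap K^c}(x+r\varphi) = \mathbf{1}_{T^K_x(A\cap K^c)}(x+s\varphi).
\]
The transformed integrand is therefore $q(s\varphi)\, s^{n-1}\, \mathbf{1}_{T^K_x(A\cap K^c)}(x+s\varphi)$. Converting back to Cartesian coordinates via $y = x + s\varphi$, $dy = s^{n-1}\,ds\,d\sigma(\varphi)$, and using $q(x,y) = q(y-x) = q(s\varphi)$, recovers exactly $Q(x, T^K_x(A\cap K^c))$, completing the identity.

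The main obstacle I expect is justifying this one-dimensional change of variable cleanly, since $l^\varphi_x$ is not a diffeomorphism — on each maximal sub-interval where the ray lies in $K$ it is constant, collapsing that interval to a single point. The saving observation is that the $\mathbf{1}_{K^c}$ factor in $ds = \mathbf{1}_{K^c}\,dr$ annihilates exactly those collapsed intervals, and ray-boundedness ensures the surviving increasing pieces exhaust $(0,\infty)$ up to a null set. A minor measurability check is also needed to ensure $T^K_x(A\cap K^c)$ is Borel, which follows from injectivity of $T^K_x$ on $K^c$ and the fact that its restriction to each connected component of $\{y \in K^c\}$ along a given ray is a piecewise isometry.
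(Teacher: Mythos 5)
Your proposal is correct and follows essentially the same route as the paper's proof: pass to polar coordinates centred at $x$, cancel the Jacobian factor $r^{n-1}$ against the $(l^\varphi_x(r)/r)^{n-1}$ term in the ghost density, substitute $u=l^\varphi_x(r)$ with $du=\mathbf{1}_{K^c}(x+r\varphi)\,dr$, rewrite the indicator as $\mathbf{1}_{T^K_x(A\cap K^c)}$ using injectivity of $T^K_x$, and convert back to Cartesian coordinates. The extra care you take in justifying the one-dimensional change of variable (the flat pieces of $l^\varphi_x$ being annihilated by the indicator, and ray-boundedness ensuring the image exhausts $(0,\infty)$) is a welcome elaboration of a step the paper treats implicitly, but it is not a different argument.
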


Whenever $x\in K^c$ and $K$ is ray-bounded, Lemma \ref{lemma:Q} gives that $Q_K(x,K^c)=Q(x,T^K_x(K^c))=Q(x,\RR^n)=1$, which means that $Q_K$ is a \textit{Markov kernel} on $K^c$. It then follows from~\cite[Section~2.3.1]{tierney} that the measure with density $\frac{\pi \mathbf{1}_{K^c}}{\pi(K^c)}$ is the unique stationary probability measure of the MH algorithm with the ghost proposal density $q_K$.

To complete the justification of ghost sampling, we are also able to show the LLN:

\begin{thm}\label{thm:LLN}
Let $K$ be closed and ray-bounded. Let $X_1,X_2,\dots$ be a Markov chain generated by the ghost proposal $q_K$ derived from a SRWM proposal with density $q$ which is strictly positive on $\RR^n$. Then the strong law of large numbers holds, that is, for every $\pi$-integrable function $f$ we have:
$$\frac{1}{n}\sum_{i=1}^n f(X_i)\xrightarrow[\text{a.s.}]{n\to\infty}\mathbb{E}_\pi[f(X)|X\notin K].$$
\end{thm}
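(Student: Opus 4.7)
The plan is to deduce the strong LLN from standard ergodicity results for Markov chains, by showing that the ghost sampler $(X_i)_{i\geq 1}$ is positive Harris recurrent with invariant probability $\pi_K := \pi\mathbf{1}_{K^c}/\pi(K^c)$. The remark following Lemma~\ref{lemma:Q} already establishes that $Q_K$ is a Markov kernel on $K^c$ and that $\pi_K$ is the unique stationary distribution of the MH chain with proposal $q_K$. It therefore suffices to verify $\pi_K$-irreducibility, since positive Harris recurrence for Metropolis--Hastings chains then follows from standard results (see, e.g.\ Corollary~2 in \cite{tierney}).

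For irreducibility, fix $x \in K^c$. Since $K$ is closed, $K^c$ is open, so for any $y = x + r\varphi \in K^c \setminus \{x\}$ both endpoints of the segment from $x$ to $y$ lie in $K^c$ and hence $l^\varphi_x(r) > 0$. Combined with the strict positivity of $q$ on $\RR^n$, this and \eqref{eq:ghostdens} yield $q_K(x,y) > 0$ for Lebesgue-almost every $y \in K^c$. Now let $A \subset K^c$ with $\pi_K(A) > 0$; then $A \cap \{\pi > 0\}$ has positive Lebesgue measure, so the ghost proposal lies in this set with positive probability. The acceptance probability \eqref{eq:ap} simplifies via the symmetry of $q_K$ to $\min(1, \pi(y)/\pi(x))$, which is strictly positive on $\{\pi > 0\}$ (and is interpreted as $1$ when $\pi(x) = 0$, per the convention following \eqref{eq:ap}). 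Hence the chain reaches $A$ in one step with positive probability from any $x \in K^c$, giving $\pi_K$-irreducibility.

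With $\pi_K$-irreducibility and the existence of the invariant probability $\pi_K$ in hand, positive Harris recurrence of the ghost sampler follows from Corollary~2 of \cite{tierney}. The strong LLN for positive Harris recurrent Markov chains (see, e.g., Theorem~17.0.1 in the referenced literature on MCMC) then yields, for every starting state $X_0 \in K^c$ and every function $f$ integrable against $\pi_K$,
\[
\frac{1}{n}\sum_{i=1}^n f(X_i) \xrightarrow[\text{a.s.}]{n\to\infty} \int_{K^c} f\, d\pi_K = \mathbb{E}_\pi[f(X) \mid X \notin K].
\]
Since $d\pi_K \leq \pi(K^c)^{-1}\, d\pi$ on $K^c$, any $\pi$-integrable $f$ is automatically $\pi_K$-integrable, completing the argument.

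The main obstacle, as anticipated, is the irreducibility step: it must be established under only the stated hypotheses (closedness and ray-boundedness of $K$, and strict positivity of $q$) rather than the bounded convex polytope structure used to motivate Section~\ref{sec:ghost_sampler}. Ray-boundedness itself enters only through Lemma~\ref{lemma:Q}, where it guarantees that $q_K$ is a bona fide transition density on $K^c$; the irreducibility argument per se relies only on openness of $K^c$ and positivity of $q$, and is robust to this level of generality.
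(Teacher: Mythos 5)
Your proof is correct and follows essentially the same route as the paper: reduce the LLN to $\pi\mathbf{1}_{K^c}$-irreducibility via the cited results of Tierney and Meyn--Tweedie, then obtain one-step irreducibility from the strict positivity of $q_K$ on $K^c$, which in turn follows from openness of $K^c$ (giving $l^\varphi_x(r)>0$) and strict positivity of $q$. The only difference is cosmetic: you additionally spell out that the acceptance probability $\min(1,\pi(y)/\pi(x))$ is positive on $\{\pi>0\}$, a step the paper leaves implicit by proving irreducibility of the proposal kernel $Q_K$ alone and delegating the passage to the full Metropolis--Hastings kernel to the references.
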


\section{Case study: IEEE 39 New England network}\label{sec:cs}
In this section we illustrate how the ghost sampler enables inference about RoCoF violations.
As a case study we consider the IEEE 39-bus New England interconnection system, which has 10 generators and 29 load nodes, see Fig.~\ref{fig:ieee39}(a). The system parameters for our experiments are taken from the Matpower Simulation Package~\cite{zimmerman2011matpower}. 

We consider the Kron-reduced version of the aforementioned system, which is illustrated in Fig.~\ref{fig:ieee39}(b). The thickness of the edges in Fig.~\ref{fig:ieee39}(b) is proportional to the equivalent susceptance between the two corresponding generator nodes.
\begin{figure}[!h]
\centering
\vspace{-0.3cm}
\subfloat[]{\includegraphics[scale=0.195]{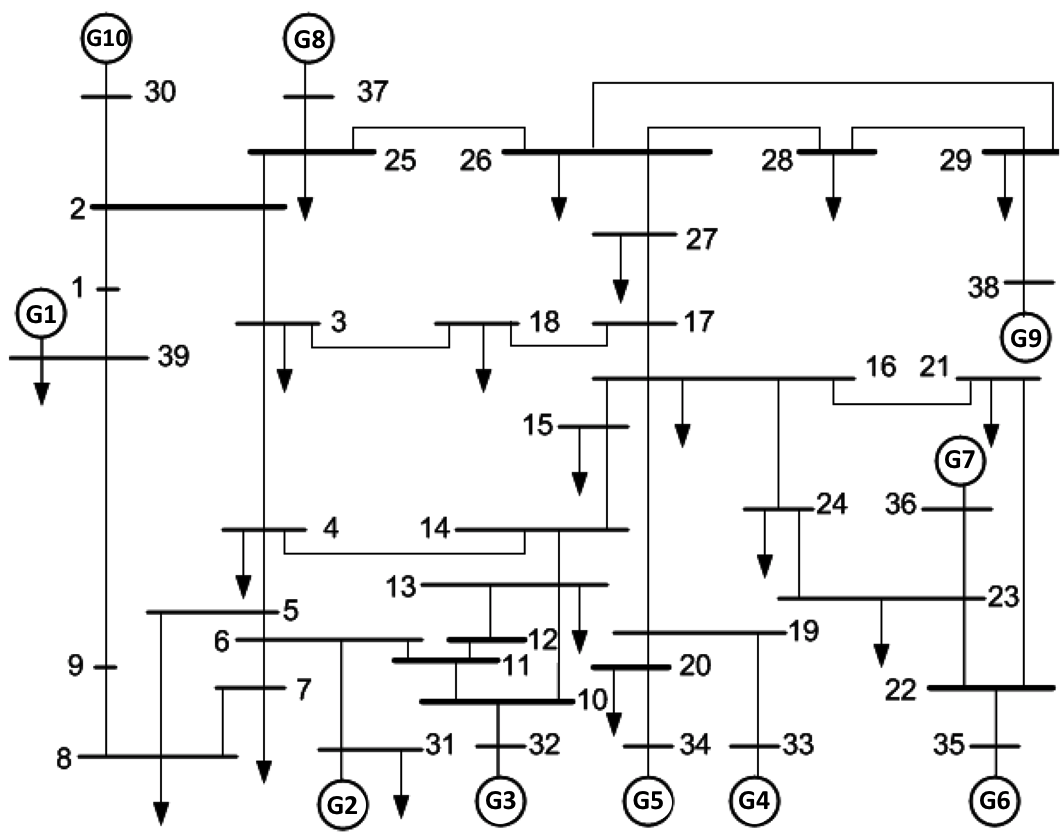}}\\
\subfloat[]{\includegraphics[scale=0.195]{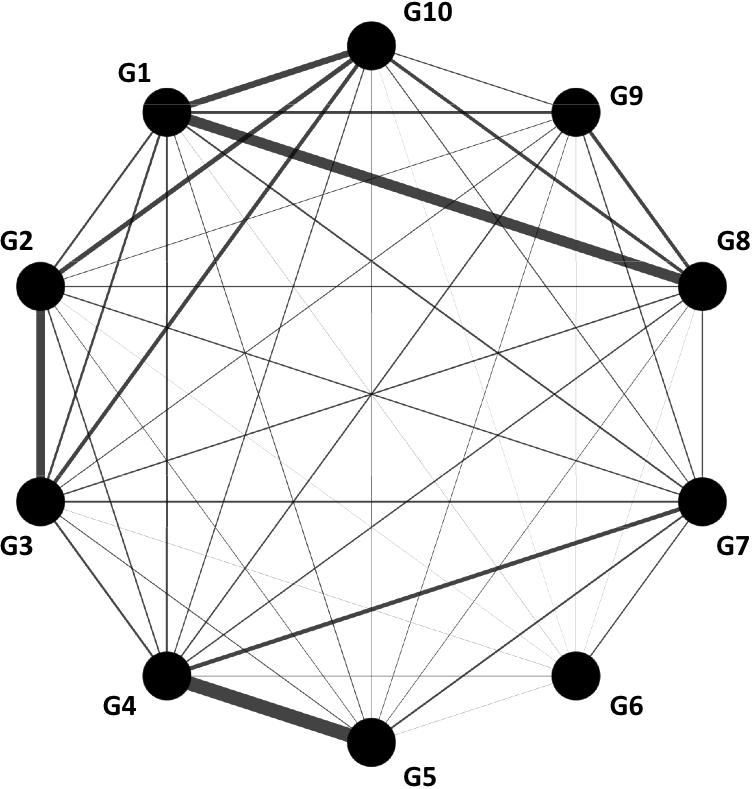}}
\caption{(a) Line diagram of the IEEE 39-bus system and (b) the Kron-reduced version of the IEEE 39-bus system with only the 10 generator nodes.}
\label{fig:ieee39}
\end{figure}

The ghost sampler is capable of sampling from any continuous distribution one may want to consider for the power disturbances, in particular those featuring heavy-tailed or correlated components. Aiming to illustrate its potential, we thus consider a mixed distribution that prescribes the disturbances $u_1$ and $u_2$ in generators $1$ and $2$ to be correlated and heavy-tailed, while the remaining generators are assumed to have i.i.d.~Gaussian disturbances. 

More specifically, we model the disturbances $u_3,\dots, u_{10}$ as independent Gaussian random variables with zero mean and standard deviations of the size $1/65$ times the nominal power injections of the associated generators, while the disturbances $u_1$ and $u_2$ are modelled by a correlated heavy-tailed density
\begin{align}
(u_1,u_2)\sim\frac{1}{1+(30(u_1-u_2/2))^4}\cdot \frac{1}{1+(30(u_2-u_1/2))^4}.\nonumber
\end{align}

RoCoF violations are characterised using the maximum RoCoF metric with threshold $\rmax=1\mathrm{Hz}/\mathrm{s}$, which corresponds to the safe region $K^{(N)}$ introduced in the previous section, see~\eqref{eq:Kn}. The duration considered is $\epsilon=0.5\mathrm{s}$ and the sensitivity of the results is examined with respect to $N$, taking $N=1,5,20,50,100$. The ghost sampler uses a Gaussian proposal $N(0,\sigma^2I)$, whose standard deviation $\sigma^2=10^{-3}$ has been tuned so that approximately 15\% of the proposed moves are accepted, as suggested in~\cite{NealRoberts}.

For each value of $N$, $10^6$ disturbances $u$ from the complement of $K^{(N)}$ were sampled after discarding an initial burn-in period. 
For each generator, Table~\ref{table1} reports its probability of disconnection due to a nodal RoCoF violation. Note that generator $10$ was never disconnected in our experiments and so is not shown.

\begin{table}[h!]
\centering
\begin{tabular}{ |c|| c c c c c c c c c | }
\hline
 $N$ & $G1$ & $G2$ & $G3$ & $G4$ & $G5$ & $G6$ & $G7$ & $G8$ & $G9$\\ 
 \hline\hline
 $1$ & 28.9 & 80.3 & 0.5 & 0.6 & 0.9 & 0.4 & 1.6 & 6.5 & 1.6 \\
$5$ & 27.6 & 81.5 & 12.4 & 1.1 & 2.1 & 0 & 1.2 & 9.8 & 1.7 \\
$20$ & 27.5 & 79.5 & 11.5 & 1.9 & 3.0 & 0.1 & 2.4 & 15.5 & 2.0 \\
$50$ & 28.5 & 78.8 & 12.2 & 1.1 & 2.7 & 0.1 & 2.4 & 17.1 & 2.6 \\
$100$ & 28.6 & 79.8 & 12.2 & 1.7 & 2.4 & 0.1 & 1.9 & 15.6 & 2.0 \\
\hline
\end{tabular}
\caption{Conditional probabilities (in \%) of nodal RoCoF violations at each generator, given that a RoCoF violation occurs. Results are shown for different time discretizations $N$ of the interval $[0,0.5\mathrm{s}]$.}
\label{table1}
\end{table}
Despite noise due to random sampling, the estimates in Table \ref{table1} are consistent for $N>5$. Recalling \eqref{eq:expsol}, the appropriate choice of $N$ is also informed by the spectral properties of the matrix $A$. In particular, the highest frequency component of the fluctuations is the eigenfunction corresponding to the eigenvalue with largest imaginary part.
Table~\ref{table2} reports some other relevant statistics for the IEEE 39-bus system under the considered disturbance model, namely the probability $p_d$ of multiple RoCoF violations, the average number $\overline{d}$ of violations, and the corresponding average level $\overline{L}$ of lost generation. 

\begin{table}[h!]
\centering
\begin{tabular}{ |c| c c c c c| }
\hline
  & $N=1$ & $N=5$ & $N=20$ & $N=50$ & $N=100$ \\ 
 \hline
 $p_d$ & 15.2\% & 22.4\%  & 24.0\% & 25.0\% & 25.0\% \\
 $\overline{d}$ & 1.21 & 1.37  & 1.44 & 1.46 & 1.44\\
 $\overline{L}$ & 596 & 701 & 735 & 744 & 736\\
\hline
\end{tabular}
\caption{Some statistics for the IEEE 39-bus system: the probability $p_d$ of disconnecting more than one generator, the average number $\overline{d}$ of disconnected generators and the average amount $\overline{L}$ of lost load (in MW).}
\label{table2}
\end{table}

Our case study results highlight the importance of modelling the desynchronization in nodal frequency. It is clear from Table \ref{table1} that the majority of RoCoF violations occur at generator 2, which has a heavy-tailed disturbance model. From Fig.~\ref{fig:ieee39}, generator 2 is connected via a relatively high susceptance line to generator 3, which has a Gaussian disturbance model. Thus RoCoF violations due to a large disturbance at the former generator are capable of inducing subsequent violations at the latter within our considered timescale. This network effect is clearly visible in Fig.~\ref{fig:nodalfrequencies}, where a large initial disturbance at generator 2 is followed by a subsequent RoCoF violation, at around $t=0.4s$, at generator 3. The same relationship can be seen between generators 1 and 8. These observations highlight the importance of the (reduced) network structure and line susceptances in the modelling of frequency violation patterns and system vulnerabilities.


\section{CONCLUSIONS}
\label{sec:conclusions}
This work aims to provide the mathematical framework to understand how unusually large power disturbances cause frequency violations, in particular in terms of RoCoF. We describe the time evolution of the nodal frequencies as a system of coupled swing equations with a random step disturbance at time $0$. A novel MCMC method is introduced, called the ghost sampler, to sample disturbances conditionally on a RoCoF violation occurring, i.e., outside the so-called ``safe region''. An illustrative case study is presented, and it would be of interest to develop this further, for example using empirical probability distributions for heavy-tailed and correlated renewable forecast errors.

Future work will explore further metrics capturing frequency violations, such as the nadir. It would be natural to look also at line overloads caused by power fluctuations and complement in this way the work done in~\cite{Owen2017}. Lastly, we believe that the MCMC ghost sampler has potentially wide applicability beyond power systems reliability in settings where one has to sample rare events. This is particularly so in view of the fact that many of the conditions for the region $K$ can be relaxed.


\section*{ACKNOWLEDGEMENTS}

JM and JV were supported by EPSRC grant EP/P002625/1. AZ is supported by NWO Rubicon grant 680.50.1529. The authors thank Linqi Guo, Janusz Bialek, and Steven H.~Low for helpful discussions on the model.

                                  
\bibliographystyle{alpha}
\bibliography{arxivcdcrocof}

\newcommand{\etalchar}[1]{$^{#1}$}
\begin{thebibliography}{WEBK15}

\bibitem[ADF{\etalchar{+}}05]{Andersson2005}
G.~Andersson, P.~Donalek, R.~Farmer, N.~Hatziargyriou, I.~Kamwa, P.~Kundur,
  N.~Martins, J.~Paserba, P.~Pourbeik, J.~Sanchez-Gasca, et~al.
\newblock Causes of the 2003 major grid blackouts in north america and europe,
  and recommended means to improve system dynamic performance.
\newblock {\em IEEE transactions on Power Systems}, 20(4):1922--1928, 2005.

\bibitem[BDNL08]{bludszuweit2008statistical}
H.~Bludszuweit, J.A. Dom{\'\i}nguez-Navarro, and A.~Llombart.
\newblock Statistical analysis of wind power forecast error.
\newblock {\em IEEE Transactions on Power Systems}, 23(3):983--991, 2008.

\bibitem[BGJM11]{Handbook}
S.~Brooks, A.~Gelman, G.L. Jones, and X.-L. Meng, editors.
\newblock {\em Handbook of Markov chain Monte Carlo}.
\newblock Handbooks of Modern Stat. Methods. Chapman \& Hall, Boca Raton, 2011.

\bibitem[CMSR13]{creighton2013increased}
K.~Creighton, M.~McClure, R.~Skillen, and A.~Rogers.
\newblock Increased wind generation in ireland and northern ireland and the
  impact on rate of change of frequency.
\newblock In {\em Proc. of the 12th Wind Integration Workshop}, 2013.

\bibitem[DB10]{DB10}
F.~D{\"{o}}rfler and F.~Bullo.
\newblock {Synchronization of Power Networks: Network Reduction and Effective
  Resistance}.
\newblock {\em IFAC Proceedings}, 43(19):197--202, 2010.

\bibitem[DB13]{DB13}
F.~D{\"{o}}rfler and F.~Bullo.
\newblock {Kron reduction of graphs with applications to electrical networks}.
\newblock {\em IEEE Transactions on Circuits and Systems I: Regular Papers},
  60(1):150--163, 2013.

\bibitem[GZD{\etalchar{+}}17]{Guggilam2017}
S.S. Guggilam, C.~Zhao, E.~Dall'Anese, Y.C. Chen, and S.V. Dhople.
\newblock {Engineering Inertial and Primary-frequency Response for Distributed
  Energy Resources}.
\newblock {\em Preprint at arXiv:1706.03612}, 2017.

\bibitem[Kun94]{Kundur1994}
P.~Kundur.
\newblock {\em Power System Stability And Control}.
\newblock EPRI Power System Engineering Series. McGraw-Hill, 1994.

\bibitem[LZW{\etalchar{+}}15]{li2015transmission}
X.~Li, X.~Zhang, L.~Wu, P.~Lu, and S.~Zhang.
\newblock Transmission line overload risk assessment for power systems with
  wind and load-power generation correlation.
\newblock {\em IEEE Transactions on Smart Grid}, 6(3):1233--1242, 2015.

\bibitem[MT09]{tweedie}
S.~Meyn and R.L. Tweedie.
\newblock {\em Markov chains and stochastic stability}.
\newblock CUP, Cambridge, 2nd edition, 2009.

\bibitem[NRY12]{NealRoberts}
P.~Neal, G.~Roberts, and W.K. Yuen.
\newblock Optimal scaling of random walk metropolis algorithms with
  discontinuous target densities.
\newblock {\em The Annals of Applied Probability}, 22(5):1880--1927, 2012.

\bibitem[NZZ17a]{NestiZoccaZwart2017}
T.~Nesti, A.~Zocca, and B.~Zwart.
\newblock {Emergent failures and cascades in power grids: a statistical physics
  perspective}.
\newblock {\em Preprint at arXiv:1709.10166}, 2017.

\bibitem[NZZ17b]{Nesti2017}
T.~Nesti, A.~Zocca, and B.~Zwart.
\newblock Line failure probability bounds for power grids.
\newblock In {\em Power \& Energy Society General Meeting, 2017 IEEE}, pages
  1--5. IEEE, 2017.

\bibitem[OM17]{Owen2017}
A.B. Owen and Y.~Maximov.
\newblock {Importance sampling the union of rare events with an application to
  power systems analysis}.
\newblock {\em Preprint at arXiv:1710.06965}, 2017.

\bibitem[PBD17]{Poolla2017a}
B.K. Poolla, S.~Bolognani, and F.~D{\"{o}}rfler.
\newblock {Optimal Placement of Virtual Inertia in Power Grids}.
\newblock {\em IEEE Transactions on Automatic Control}, 62(12):6209--6220,
  2017.

\bibitem[PM17]{PaganiniMallada2017}
F.~Paganini and E.~Mallada.
\newblock {Global performance metrics for synchronization of heterogeneously
  rated power systems: The role of machine models and inertia}.
\newblock In {\em 55th Annual Allerton Conference on Communication, Control,
  and Computing}, pages 324--331. IEEE, 2017.

\bibitem[PMVB05]{Purchala2005}
K.~Purchala, L.~Meeus, D.~{Van Dommelen}, and R.~Belmans.
\newblock {Usefulness of DC power flow for active power flow analysis}.
\newblock In {\em IEEE Power Engineering Society General Meeting}, pages
  2457--2462. IEEE, 2005.

\bibitem[RR04]{roberts}
G.O. Roberts and J.S. Rosenthal.
\newblock General state space {M}arkov chains and {MCMC} algorithms.
\newblock {\em Probab. Surv.}, 1:20--71, 2004.

\bibitem[SBA{\etalchar{+}}18]{Schafer2018}
B.~Sch{\"{a}}fer, C.~Beck, K.~Aihara, D.~Witthaut, and M.~Timme.
\newblock {Non-Gaussian power grid frequency fluctuations characterized by
  L{\'{e}}vy-stable laws and superstatistics}.
\newblock {\em Nature Energy}, 3(2):119--126, 2018.

\bibitem[TBG15]{Tegling2015}
E.~Tegling, B.~Bamieh, and D.F. Gayme.
\newblock {The Price of Synchrony: Evaluating the Resistive Losses in
  Synchronizing Power Networks}.
\newblock {\em IEEE Transactions on Control of Network Systems}, 2(3):254--266,
  2015.

\bibitem[Tie94]{tierney}
L.~Tierney.
\newblock Markov chains for exploring posterior distributions.
\newblock {\em Ann. Statist.}, 22(4):1701--1762, 1994.
\newblock With discussion and a rejoinder by the author.

\bibitem[UBA14]{Ulbig2014}
A.~Ulbig, T.S. Borsche, and G.~Andersson.
\newblock {Impact of low rotational inertia on power system stability and
  operation}.
\newblock {\em IFAC Proceedings}, 19(3):7290--7297, 2014.

\bibitem[VPMB16]{Vincent2016}
T.L. Vincent, K.~Poolla, S.~Mohagheghi, and E.~Bitar.
\newblock {Stability guarantees for primary frequency control with randomized
  flexible loads}.
\newblock In {\em 2016 American Control Conference (ACC)}, pages 2328--2333.
  IEEE, jul 2016.

\bibitem[WEBK15]{Winter2015}
W.~Winter, K.~Elkington, G.~Bareux, and J.~Kostevc.
\newblock Pushing the limits: Europe's new grid: Innovative tools to combat
  transmission bottlenecks and reduced inertia.
\newblock {\em IEEE Power and Energy Magazine}, 13(1):60--74, 2015.

\bibitem[WWS14]{Wood2014}
A.J. Wood, B.F. Wollenberg, and G.B. Sheble.
\newblock {\em {Power generation, operation, and control}}.
\newblock John Wiley {\&} Sons, 3rd edition, 2014.

\bibitem[ZMST11]{zimmerman2011matpower}
R.D. Zimmerman, C.E. Murillo-S{\'a}nchez, and R.J. Thomas.
\newblock Matpower: Steady-state operations, planning, and analysis tools for
  power systems research and education.
\newblock {\em IEEE TPS}, 26(1):12--19, 2011.

\bibitem[ZTLL14]{Zhao2014a}
C.~Zhao, U.~Topcu, N.~Li, and S.H. Low.
\newblock {Design and Stability of Load-Side Primary Frequency Control in Power
  Systems}.
\newblock {\em IEEE Transactions on Automatic Control}, 59(5):1177--1189, 2014.

\end{thebibliography}

\vspace{1cm}
\section*{APPENDIX}

{\em Proof of Lemma 1:}
For any $r>0$ and $\varphi\in\mathbb{S}^{n-1}$ we have $l^\varphi_x(r)>0$ since $K^c$ is an open set, hence $x$ is the only point that maps to $x$. Then let $T^K_x(x + r_1\varphi_1)=T^K_x(x+r_2 \varphi_2)$ for some $r_1, r_2>0$ and $\varphi_1,\varphi_2\in\mathbb{S}^{n-1}$. This implies $x+l^{\varphi_1}_x(r_1)\varphi_1=x+l^{\varphi_2}_x(r_2)\varphi_2$, so that $\varphi_1 = \varphi_2 = \varphi$ say, and 
$0=l^\varphi_x(r_1)-l^\varphi_x(r_2)=\int_{r_2}^{r_1}\mathbf{1}_{K^c}(x+t\varphi)dt$ must hold, which is only possible if $r_1=r_2$ (again since the set $K^c$ is open).

\hfill $\Box$

{\em Proof of Lemma 2:}
Denote with $S$ the surface of $\mathbb{S}^{n-1}$. Changing \eqref{eq:ghostdens} to polar coordinates and then using the substitution $u:=l^\varphi_x(r)$ (with $du=\mathbf{1}_{K^c}(x+r\varphi)dr$), we have:
\begin{align*}
&Q_K(x,A)=\int_{\RR^n}q_K(x,y)\mathbf{1}_A(y)dy\\
&=\frac{1}{S}\int_{\mathbb{S}^{n-1}}\left(\int_0^{\infty}q_K(x,x+r\varphi) \mathbf{1}_A(x+r\varphi)r^{n-1}dr\right)d\varphi\\
&=\frac{1}{S}\int_{\mathbb{S}^{n-1}}\int_0^{\infty}q(l^{\varphi}_x(r)\varphi) (l^\varphi_x(r))^{n-1}\mathbf{1}_{A \cap K^c}(x+r\varphi)drd\varphi\\
&=\frac{1}{S}\int_{\mathbb{S}^{n-1}}\left(\int_0^{\infty}q(u\varphi) u^{n-1}\mathbf{1}_{T^K_x(A\cap K^c)}(x+u\varphi)du\right)d\varphi\\
&=\int_{\RR^n}q(x,y)\mathbf{1}_{T^K_x(A\cap K^c)}(y)dy=Q(x,T^K_x(A\cap K^c)). \quad \Box
\end{align*}

\begin{defin}\label{def:irred}
A Markov kernel $P$ on a space $S$ is $\nu$-irreducible with respect to a measure $\nu$ (on $S$) if for every $x\in S$ and every measurable $A\subset S$ with $\nu(A)>0$ there exists an $n\in\NN$ such that $P^n(x,A)>0$.
\end{defin}
For the approximation of $\pi$ integrable functions using our MH procedure, a sufficient condition to establish the LLN is the $\pi\mathbf{1}_K$-irreducibility of $Q_K$ on $K^c$ (see, for example \cite[Corollary~2]{tierney} and \cite[Theorem~17.1.7]{tweedie}). Hence, to establish Theorem~\ref{thm:LLN} it is enough to show the following:



\begin{thm}
Let $K$ be closed and ray-bounded. If the underlying proposal density $q$ is strictly positive then the ghost sampling kernel $Q_K$ is $\pi\mathbf{1}_K$-irreducible.
\end{thm}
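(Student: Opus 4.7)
The plan is to establish the stronger one-step irreducibility of $Q_K$: for every $x \in K^c$ and every measurable $A \subset K^c$ with $\pi(A) > 0$, I aim to show $Q_K(x, A) > 0$. Definition \ref{def:irred} only requires that some power $Q_K^n(x, A)$ be positive, so one-step positivity is strictly stronger and will deliver the required irreducibility. The entire argument then reduces to verifying pointwise positivity of the ghost density $q_K(x, \cdot)$ on $K^c \setminus \{x\}$.

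First I would argue that $l^\varphi_x(r) > 0$ for every $r > 0$ and every $\varphi \in \mathbb{S}^{n-1}$. Because $K$ is closed, $K^c$ is open, so there exists $\delta > 0$ with the ball $B(x, \delta) \subset K^c$; hence $\mathbf{1}_{K^c}(x + t\varphi) = 1$ for all $t \in [0, \delta]$, giving $l^\varphi_x(r) \geq \min(r, \delta) > 0$. This is the only place where the topology of $K$ enters the argument; in particular, ray-boundedness plays no role at this stage, having already been used (via Lemma \ref{lemma:Q}) to ensure that $Q_K$ is a Markov kernel on $K^c$ in the first place.

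Next I would inspect the definition \eqref{eq:ghostdens}. For any $y = x + r\varphi \in K^c$ with $r > 0$, the indicator factor $\mathbf{1}_{K^c}(y)$ equals $1$; the geometric factor $(l^\varphi_x(r)/r)^{n-1}$ is strictly positive by the previous step; and $q(l^\varphi_x(r)\varphi) > 0$ because $q$ is strictly positive on $\RR^n$ by hypothesis. Hence $q_K(x, y) > 0$ for every $y \in K^c \setminus \{x\}$. Since $\pi$ is a probability density with respect to Lebesgue measure, any set of positive $\pi$-mass has positive Lebesgue measure, and so
\[
Q_K(x, A) = \int_A q_K(x, y)\, dy > 0,
\]
the integrand being strictly positive on a set of positive Lebesgue measure. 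This yields the claimed irreducibility in a single step. The only genuine obstacle in the proof is the brief geometric verification that $l^\varphi_x(r) > 0$; everything else is immediate from the form of $q_K$ together with the strict positivity of $q$.
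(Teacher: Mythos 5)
Your proposal is correct and follows essentially the same route as the paper's own proof: both establish one-step irreducibility by showing $l^\varphi_x(r)>0$ (from openness of $K^c$), deducing strict positivity of $q_K(x,\cdot)$ on $K^c$ from \eqref{eq:ghostdens} and the positivity of $q$, and concluding $Q_K(x,A)>0$ because the target set has positive Lebesgue measure. The only cosmetic difference is that the paper first intersects $A\cap K^c$ with $\{\pi>0\}$ before invoking positive Lebesgue measure, whereas you argue directly that positive $\pi$-mass implies positive Lebesgue measure; both are valid.
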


\begin{proof} Fix an arbitrary $x\in K^c$ and measurable $A\subset \RR^n$ such that $\pi\mathbf{1}_{K^C}(A)=\pi(A\cap K^c)>0$. Define $B=A\cap K^c\cap \{x\in\RR^n;\pi(x)>0\}$. Clearly $\pi(B)>0$. 

Because $K^c$ is open we have $l^{\varphi}_x(r)>0$ for all $r>0$ and all $\varphi\in\mathbb{S}^{n-1}$. By equation \eqref{eq:ghostdens} $q_K$ is strictly positive on $K^c$. Since $B$ must have positive Lebesgue measure we then have
 $Q_K(x,B)>0$ which implies $Q_K(x,A)\geq Q_K(x,B)>0$, so we may take $n=1$ in Definition \ref{def:irred}.
\end{proof}

\addtolength{\textheight}{-12cm} 
                                  
\end{document}